\title{From Graphs to Keyed Quantum Hash Functions}
\author{M.~Ziatdinov}
\date{May 28, 2016}
\newcommand{\ket}[1]{\left| #1 \right\rangle}
\newcommand{\bra}[1]{\left\langle #1 \right|}
\newcommand{\braket}[2]{\langle #1 | #2 \rangle}
\newcommand{\Hilb}{\mathcal{H}^2}
\newcommand{\Cplx}{\mathbf{C}}
\newcommand{\Xzo}{\{0,1\}}
\newcommand{\HH}[1]{(\Hilb)^{\otimes #1}}
\newcommand{\Hom}[1]{\mathrm{Hom}(#1,#1)}
\newcommand{\E}{\mathbf{E}}
\newcommand{\EE}{\Psi}
\newcommand{\ZZ}{\mathbb{Z}}
\newcommand{\CC}{\mathbb{C}}
\newcommand{\Supp}{\mathrm{Supp}}
\newcommand{\approxeps}{\overset{\epsilon}{\approx}}
\newcommand{\Ext}{\mathrm{Ext}}
\newcommand{\key}{\mathsf{key}}
\newcommand{\ExpF}{\EE_{\Gamma,t}}
\newcommand{\ExtF}{\EE_{\Ext,t,S}}
\newcommand{\ExtKF}{\EE_{\Ext}}
\DeclareMathOperator{\Tr}{Tr}
\newtheorem{defn}{Definition}
\newtheorem{thm}{Theorem}
\newtheorem{cor}{Corollary}
\newtheorem{question}{Problem}
\begin{document}

\maketitle{}

\begin{abstract}
  We present two new constructions of quantum hash functions: the first based on expander graphs and the second based on extractor functions and estimate the amount of randomness that is needed to construct them. We also propose a keyed quantum hash function based on extractor function that can be used in quantum message authentication codes and assess its security in a limited attacker model.  
\end{abstract}

\section{Introduction}\label{sec:introduction}

Quantum hash functions are similar to classical (cryptographic) hash functions and their security is guaranteed by physical laws. However, their construction and applications are not fully understood.

Quantum hash functions were first implicitly introduced in \citet{Buhrman2001} as quantum fingerprinting. Then \citet{Gavinsky2010} noticed that quantum fingerprinting can be used as cryptoprimitive. However, binary quantum hash function are not very suitable if we need group operations (and group is not $\ZZ_{2^k}$. For example, several classical hash functions were proposed that use groups, e.g. by \citet{Charles2009} and by \citet{Tillich1994}. 

\citet{Ablayev2015} gave a definition and construction of non-binary quantum hash functions. \citet{Ziatdinov2016} showed how to generalize quantum hashing to arbitrary finite groups. Recently, \citet{Vasiliev2016} showed how quantum hash functions are connected with $\epsilon$-biased sets.

Quantum hash functions map a classical message into a Hilbert space. Such space should be as small as possible, so eavesdropper can't read a lot of information about classical message (this is guaranteed by physical laws as Holevo-Nayak's theorem states). But images of different messages should be as far apart as possible, so recipient can check that hash differ or not with high probability. We measure this distance using an absolute value of scalar product of hashes of different messages.

Informally speaking, to define a quantum hash function we need some random data. Then our input is mixed with this random data. Quantum parallelism allows us to do it in different subspaces simultaneously, so resulting hash is small.
For example, random subsets suffice (for $\ZZ_m$) \citep{Ablayev2008}, random codes suffice (for $\ZZ_2^n$) \citep{Buhrman2001}, random automorphisms suffice (for any finite group) \citep{Ziatdinov2016}. \citet{Vasiliev2015} used some heuristics to find best subsets of $\ZZ_m$. 

However, typically the amount of randomness that is needed to construct such quantum hash functions is large (about $O(\log^2 |G|)$). We reduce amount of randomness needed to define quantum hash function to $O(\log |G| \log \log |G|)$ in expander-based quantum hash function.

Extractor-based quantum hash function allows us to introduce a notion of keyed quantum hash function. It can be used, for example, in quantum message authentication codes. Unlike \citep{Barnum2001} and \citep{Barnum2002} we use classical keys and authenticate classical messages. Unlike \citep{Curty2001} we authenticate whole messages, not single bits. However, our security analysis has only limited attacker.

It is known that walk on expander graph gives results very similar to random sampling. We show that walks on expander graphs give a quantum hash functions in section \ref{sec:expander-qhf}. Structure of these quantum hash functions is somewhat different from previous versions.

Extractor is a generalization of expander graph. 
In the section \ref{sec:keyed-qhf} we propose a keyed quantum hash function based on extractors and assess its security against limited attacker.


\paragraph{Acknowledgements.} I thank Farid Ablayev, Alexander Vasiliev and Marco Carmosino for helpful discussions.
A part of this research was done while attending a Special Semester Program on Computational and Proof Complexity (April-June 2016) organized by Chebyshev Laboratory of St.Petersburg State University in cooperation with Skolkovo Institute of Science and Technology and Steklov Institute of Mathematics at St.Petersburg. 
Partially supported by Russian Foundation for Basic Research, Grants 14-07-00557, 15-37-21160.
The work is performed according to the Russian Government Program of Competitive Growth of Kazan Federal University.

\section{Definitions}

Let us recall some basic definitions.

\subsection{Statistics}

We use a standard definition of the statistical distance.

\begin{defn}[Statistical distance, cited by \citet{Shaltiel2011}]
  We say that two distributions $F$ and $G$ are $\epsilon$-close, if for every event $A$, $|\Pr[F \in A] - \Pr[G \in A]| \le \epsilon$.

  The support of a distribution $X$ is $\Supp(X) = \{ x : \Pr[X = x] > 0 \}$.

  The uniform distribution over $\Xzo^m$ is denoted by $U_m$ and we say that $X$ is $\epsilon$-close to uniform if it is $\epsilon$-close to $U_m$.

  We denote that distribution $F$ is $\epsilon$-close to distribution $G$ by $F \approxeps G$.
\end{defn}

We also use a standard definition of the min-entropy.

\begin{defn}[Min-entropy, cited by \citet{Shaltiel2011}]
  Let $X$ be a distribution. The min-entropy of $X$ is $H_\infty(X) = \min_{x \in \Supp(X)} \log \frac 1 {\Pr[X=x]}$.
\end{defn}

\subsection{Quantum model of computation}

We use the following model of computation.

Recall that a qubit $\ket{\Psi}$ is a superposition of basis states $\ket{0}$ and $\ket{1}$, i.e. $\ket{\Psi} = \alpha\ket{0} + \beta\ket{1}$, where $\alpha, \beta \in \Cplx$ and $|\alpha|^2 + |\beta|^2 = 1$. So, qubit $\ket{\Psi} \in \Hilb$, where $\Hilb$ is a two-dimensional Hilbert complex space.

Let $s \ge 1$. We denote $2^s$-dimensional Hilbert complex space by $\HH{s}$:
\[
\HH{s} = \Hilb \otimes \Hilb \otimes \ldots \otimes \Hilb = \mathcal{H}^{2^s}
\]

We denote a state $\ket{a_1}\ket{a_2}\ldots\ket{a_n}$, each $a_i \in \Xzo$, by $\ket{i}$, where $i$ is $\overline{a_1a_2\ldots a_n}$ in binary. For example, we denote $\ket{1}\ket{1}\ket{0}$ by $\ket{6}$. Usually it is clear, which space this state belongs to.

Computation is done by multiplying a state by a unitary matrix: $\ket{\Psi_1} = U \ket{\Psi_0}$, where $U$ is a unitary matrix: $U^\dagger U = I$, $U^\dagger$ is the conjugate matrix and $I$ is the identity matrix.

The density matrix of a mixed state $\{p_i, \ket{\psi_i}\}$ is a matrix $\rho = \sum_i p_i \ket{\psi_i}\bra{\psi_i}$. A density matrix belongs to $\Hom{\HH{s}}$, the set of linear transformations from $\HH{s}$ to $\HH{s}$.

At the end of computation state is measured by POVM (Positive Operator Valued Measure). A POVM on a $\HH{s}$ is a collection $\{E_i\}$ of positive semi-definite operators $E_i : \Hom{\HH{m}} \to \Hom{\HH{m}}$ that sums up to the identity transformation, i.e. $E_i \succeq 0$ and $\sum_i E_i = I$. Applying a POVM $\{E_i\}$ on a density matrix $\rho$ results in answer $i$ with probability $\Tr(E_i \rho)$.

\subsection{Character theory}






\begin{defn}[Character of the group]
  Let $G$ be a group with unity $e$ and operation $\circ$.

  The character $\chi: G \to \CC$ of the group $G$ is a homomorphism of $G$ to $\CC$: for any $g, g' \in G$ it holds that $\chi(g \circ g') = \chi(g) \chi(g')$.
\end{defn}

\subsection{Graphs}


\begin{defn}[Expander graph, cited by \citet{Hoory2006}]
  Let the graph $\Gamma = (V,E)$ with set of vertices $V$ and set of edges $E$ be fixed. Self-loops and multiple edges are allowed.

  Graph $\Gamma$ is the $d$-regular graph if all vertices have the same degree $d$; i.e. each vertex is incident to exactly $d$ edges.

  Adjacency matrix of the graph $A = A(\Gamma)$ is an $n \times n$ matrix whose $(u,v)$ entry is the number of edges between vertex $u$ and vertex $v$.

  Let $\lambda_1 \ge \lambda_2 \ge \ldots \ge \lambda_n$ be eigenvalues of matrix $A = A(\Gamma)$, i.e. for some $v_i$ it holds that $A v_i = \lambda_i v_i$. We refer to the eigenvalues of $A(\Gamma)$ as the spectrum of the graph $\Gamma$.

  Given a $d$-regular graph $\Gamma$ with $n$ vertices and spectrum $\lambda_1 \ge \lambda_2 \ge \ldots \ge \lambda_n$ we denote $\lambda(\Gamma) = \max\{ |\lambda_2|, |\lambda_n| \}$.

  We call the graph $\Gamma$ a $(d,\lambda)$-expander graph if $\Gamma$ is $d$-regular and has $\lambda(\Gamma) = \lambda$.
\end{defn}

Every expander graph can be converted to a bipartite expander graph. One can just take two copies of vertex sets and change original edges to go from one copy to another. Generalization of these bipartite expander graphs is extractor graphs. The extractor graph is a bipartite graph where size of components can be different. An extractor can also be defined in terms of function that maps pair of first component vertex and edge to second component vertex.

\begin{defn}[(Seeded) extractor, cited by \citet{Shaltiel2011}]
  A function $E: \Xzo^n \times \Xzo^d \to \Xzo^m$ is a $(k,\epsilon)$-extractor if for every distribution $X$ over $\Xzo^n$ with $H_\infty(X) \ge k$, $E(X,Y)$ is $\epsilon$-close to uniform (where $Y$ is distributed like $U_d$ and is independent of $X$).
\end{defn}

Sometimes we use extractor functions that map one (arbitrary) set to other: $E: G \times \Xzo^d \to H$. These functions can be thought of as bipartite graphs with vertices $(G,H)$. In this case we denote uniform distribution on $H$ by $U_H$. 

We also use extractors against quantum storage. Informally, their output is $\epsilon$-close to uniform and no quantum circuit operating on $b$ qubits can distinguish output from uniform.

\begin{defn}[Extractor against quantum storage, cited by \citet{Ta-Shma2009}]
  An $(n,b)$ quantum encoding is a collection $\{\rho(x)\}_{x \in \Xzo^n}$ of density matrices $\rho(x) \in \HH{b}$.

  A boolean test $T$ $\epsilon$-distinguishes a distribution $D_1$ from a distribution $D_2$ if $|\Pr_{x_1 \in D_1}[T(x_1) = 1] - \Pr_{x_2 \in D_2}[T(x_2) = 1]| \ge \epsilon$.

  We say $D_1$ is $\epsilon$-indistinguishable from $D_2$ if no boolean POVM can $\epsilon$-distinguish $D_1$ from $D_2$.

  A function $X: \Xzo^n \times \Xzo^d \to \Xzo^m$ is a $(k,b,\epsilon)$ strong extractor against quantum storage, if for any distribution $X \subseteq \Xzo^n$ with $H_\infty(X) \ge k$ and every $(n,b)$ quantum encoding $\{\rho(x)\}$, $U_t \circ E(X,U_t) \circ \rho(X)$ is $\epsilon$-indistinguishable from $U_{t+m} \circ \rho(X)$.
\end{defn}

\section{Quantum hash functions}\label{sec:qhf}

Informally, quantum hash function is a function that maps {\em large} classical input to a {\em small} quantum (hash) state such that two requirements are satisfied: (1) it is hard to restore input given the hash state and (2) it is easy to check with high probability that inputs for two quantum hash states are equal or different.

It is easy to meet the first requirement for a constant hash size. One can simply take a qubit $\ket{\Psi(w)} = \alpha(w)\ket{0} + \beta\ket{1}$ and encode the input in a fractional part of $\alpha$. But then the second requirement is not satisfied.

It is easy to meet the second requirement for a hash size that is logarithmic in input size. One can simply map the input to the corresponding base state: $\ket{\Psi(i)} = \ket{i}$. However, then the first requirement is not satisfied.

Let us give the formal definition.

\begin{defn}[Quantum hash function, cited by \citet{Ablayev2014}]
  For $\delta \in (0, 1/2)$ we call a function $\psi : X \to \HH{s}$ a $\delta$-resistant function if for any pair $w,w'$ of different elements of $X$ their images are almost orthogonal:
  \begin{equation}\label{eq:qhf-resistance}
  |\braket{\psi(w)}{\psi(w')}| \le \delta.
  \end{equation}

  We call a map $\psi: X \to \HH{s}$ an $\delta$-resistant $(K;s)$ quantum hash function if $\psi$ is a $\delta$-resistant function, and $\log|X| = K$.
\end{defn}

Quantum hash function maps inputs of length $K$ to (quantum) outputs of length $s$. If $K \gg s$ any attacker can't get a lot of information by Holevo-Nayak theorem \cite{Nayak1999}.

The equality of two hashes can be checked using, for example, well-known SWAP-test \citep{Gottesman2001}.

All our hash functions have the following form:
\begin{equation}\label{eq:qhf-structure}
  \ket{\psi(g)} = \sum_{i=1}^t \chi(k_i(g)) \ket{i},
\end{equation}
where $g$ is an element of some group $G$, $\{k_i, i=1,\ldots,t\}$, $k_i: G \to H$ is a set of mappings from group $G$ with operation $\circ$ to group $H$ with operation $\bullet$ and $\chi: H \to \CC$ is a character of the group $H$.

For example, the group $G$ can be thought of as $Z_{2^n}$ with group operation $+$, then elements of $G$ can be encoded as binary strings $\Xzo^n$ of length $n$

\subsection{Why groups?}\label{sec:why-group}

We use groups in quantum hash functions of form (\ref{eq:qhf-structure}), not just arbitrary sets, because groups have nice structure. We can combine elements of group and we can inverse them. 

Several classical cryptoprimitives were proposed that use groups, e.g. by \citet{Charles2009} and by \citet{Tillich1994}.

\section{Expanders for Quantum Hashing}\label{sec:expander-qhf}

As noted in Section \ref{sec:introduction}, randomly chosen parameters with high probability lead to a quantum hash function. We replace this process with random walk on expander graph that is known to be close to uniform sampling.

In this section we fix a group $G$ with group operation $\odot$ and unity $e$. 

Let $\Gamma = (V,E)$ be an extractor - i.e. $d$-regular graph with spectral gap $\lambda$. We label vertices $V$ of graph $\Gamma$ with messages (i.e. elements of group $G$).

Let us randomly choose one vertex and perform a random walk of length $t$ starting from it. Denote vertices that occured in this walk by $s_j$. Parameter $t$ depend on security parameter $\epsilon$ of quantum hash function and we derive its value in theorem \ref{thm:t-expander}.

It is easy to note that such construction requires only $t d + \log |G|$ bits of randomness.

Let us define the expander quantum hash function.

\begin{defn}
  The expander quantum hash function $\ExpF(g)$ maps elements of $G$ to unitary transformations in $m$-dimensional Hilbert space $\HH{m}$:
  \[
  \ExpF(g) = \sum_{k=1}^t \chi(g \odot s_k) \ket{k}.
  \]
\end{defn}

If we choose $\Gamma$ and $t$ appropriately, $\ExpF$ is a quantum hash function.

\begin{thm}\label{thm:t-expander}
  For any $\delta \in (0; \frac 1 2)$ the function $\ExpF$ is a $\delta$-resistant $(\log |G|;\log t)$ quantum hash function if $t > O(\frac{\log |G|}{\delta})$.
\end{thm}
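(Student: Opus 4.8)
The plan is to bound the inner product $|\braket{\ExpF(g)}{\ExpF(g')}|$ for distinct $g, g' \in G$ and show it is at most $\delta$ once $t$ is large enough. Writing out the definition, the (unnormalized) inner product is $\sum_{k=1}^t \overline{\chi(g \odot s_k)}\,\chi(g' \odot s_k)$. Using that $\chi$ is a homomorphism and a character (so $\overline{\chi(a)} = \chi(a^{-1})$ and $\chi(a)\chi(b) = \chi(a \circ b)$), I would rewrite each term as $\chi\bigl((g \odot s_k)^{-1} \circ (g' \odot s_k)\bigr)$, or more usefully as $\chi(g^{-1} \circ g')$ conjugated/translated by $s_k$. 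The key observation is that this collapses to a sum of the form $\sum_k \chi\bigl(h(s_k)\bigr)$ for a fixed group element depending on $g^{-1}g'$, so the whole quantity is an average of character values evaluated along the random walk.

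First I would normalize: each $\ket{\ExpF(g)}$ must be divided by $\sqrt{t}$ so that it is a unit vector, so the resistance condition becomes $\frac{1}{t}\bigl|\sum_{k=1}^t \chi(\cdots_k)\bigr| \le \delta$. Then I would invoke the defining property of expander graphs — namely the Expander Mixing Lemma or, more directly, the Expander Chernoff/Hoeffding bound for random walks — which says that the empirical average of a bounded function $f$ over the vertices visited by a length-$t$ walk is close to its true mean $\E_{v}[f(v)]$ with deviation controlled by the spectral gap $\lambda$. Here $f(s) = \chi(\text{(fixed element)} \odot s)$ is bounded by $1$ in absolute value. The true mean $\E_v[\chi(\text{fixed} \odot v)]$ is a sum of a nontrivial character over the group, which vanishes (by orthogonality of characters, provided the fixed element is nontrivial, which holds because $g \neq g'$). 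So the walk-average concentrates around $0$, and choosing $t = O(\log|G| / \delta)$ or similar forces the deviation below $\delta$.

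The main obstacle I anticipate is twofold. First, the character sum $\sum_k \chi(\cdots)$ is complex-valued, so I would handle the real and imaginary parts separately (or bound the modulus directly via a complex-valued expander sampling bound), making sure the spectral-gap concentration applies to $\Cplx$-valued bounded functions. Second, and more delicate, is pinning down the exact dependence of $t$ on $\delta$ and $\lambda$: the expander walk deviation bound typically gives something like $\Pr[|\text{avg} - \mu| > \gamma] \le 2 e^{-\Omega((1-\lambda)\gamma^2 t)}$, and I must confirm that the stated $t > O(\log|G|/\delta)$ both yields a deviation below $\delta$ and holds with high enough probability over the choice of walk that a single good labeling exists for \emph{all} $\binom{|G|}{2}$ pairs simultaneously. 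This union bound over pairs is where the $\log|G|$ factor enters, and reconciling the union-bound failure probability with the claimed deviation $\delta$ is the crux of the quantitative argument.

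A secondary subtlety is that the statement asserts $\ExpF$ maps into $\HH{\log t}$ (output size $\log t$ qubits), so I would check that $t$ basis states indeed require $\lceil \log t \rceil$ qubits and that the normalization and character-sum structure match the required form in equation~(\ref{eq:qhf-structure}). The inner-product reduction to a single character sum over the walk is the conceptual heart; the rest is applying the expander sampling bound and a union bound, so I would present those two ingredients as cited black boxes rather than reproving them.
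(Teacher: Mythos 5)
Your overall strategy --- reduce the inner product to a character sum along the walk, invoke Gillman's expander Chernoff bound, and handle all pairs by a union bound at per-event failure probability about $1/|G|$ --- is exactly the paper's route, and you are in fact more explicit than the paper about normalization, the vanishing of the mean, and the complex-valued issue. But there is a genuine gap at the crux, precisely where you hedge between ``conjugated/translated.'' The algebra gives
\[
\overline{\chi(g\odot s_k)}\,\chi(g'\odot s_k)
  \;=\;
  \chi\bigl(s_k^{-1}\odot g^{-1}\odot g'\odot s_k\bigr)
  \;=\;
  \chi\bigl(s_k^{-1}\odot g''\odot s_k\bigr),
\]
a \emph{conjugate} of $g''=g^{-1}\odot g'$ by $s_k$, not a translate $\chi(g''\odot s_k)$. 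Your plan then commits to the translate $f(s)=\chi(\mathrm{fixed}\odot s)$, whose mean indeed vanishes by orthogonality and to which the walk-sampling bound applies cleanly --- but that bounds the wrong quantity. Since the paper's characters are one-dimensional homomorphisms $\chi: G \to \CC$, conjugation is invisible to them: $\chi(s^{-1}\odot g''\odot s)=\chi(s)^{-1}\chi(g'')\chi(s)=\chi(g'')$. So the actual sum is $t\,\chi(g'')$, of modulus $t$, constant along the walk; after normalization the inner product is $1$, and no choice of expander, walk length, or concentration inequality can repair this. The failure is in the reduction step, not in the sampling lemma.

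You should know that the paper's own proof stumbles on the same point, just disguised differently: it asserts that $x_k=s_k^{-1}\odot g''\odot s_k$ ``is also some random walk on graph $\Gamma$'' and applies Gillman's bound to $\chi$ along $x_k$. That assertion is unjustified --- for abelian $G$ the sequence $x_k$ is literally the constant $g''$ --- so neither argument closes the gap. A sound version of the construction must let the \emph{character}, not the group element, vary with the walk: label the vertices by characters $\chi_s$ and hash $g\mapsto\sum_k \chi_{s_k}(g)\ket{k}$, so the inner product becomes $\sum_k\chi_{s_k}(g'')$ and the walk-average of $s\mapsto\chi_s(g'')$ (mean zero for $g''\neq e$) is exactly what Gillman's bound controls; your machinery then goes through essentially verbatim. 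Two of your secondary worries also deserve answers: the union bound needs to run only over the $|G|-1$ nontrivial values of $g''$ (not over all $\binom{|G|}{2}$ pairs), which is why failure probability $1/|G|$ per event suffices; and your suspicion about the $\delta$-dependence is correct --- setting the normalized deviation to $\delta$ in Gillman's bound yields $t=O\bigl(\log|G|/((1-\lambda)\delta^2)\bigr)$, so the paper's stated $O(\log|G|/\delta)$ silently drops a power of $\delta$.
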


\begin{proof}
  Let us fix some $t$. 
  
  \[
  \braket{\EE^\dagger(g)}{\ExpF(g')} = \sum_{k=1}^t \braket{\chi^*(g \odot s_k)} {\chi(g' \odot s_k)} = |\sum_{k=1}^t \chi(s_k^{-1} \odot g^{-1} \odot g' \odot s_k)|.
  \]

  Denoting $g'' = g^{-1} \odot g'$, we get
  \[
  \braket{\EE^\dagger(g)}{\ExpF(g')} = |\sum_{k=1}^t \chi(s_k^{-1} \odot g'' \odot s_k)|,
  \]
  and $x_k = s_k^{-1} \odot g'' \odot s_k$ is also some random walk on graph $\Gamma$.

  Let $G$ be a weighted graph with eigenvalue gap $\epsilon = 1 - \lambda$ and non-uniformity $\nu$. Let random walk on $G$ starts in distribution $q$ and has stationary distribution $\pi$. Then Chernoff bound for expander graphs \citep{Gillman1993} states that for any positive integer $n$ and for any $\gamma > 0$:
  \begin{equation}\label{eq:gillman-chernoff}
  \Pr\left[ \bigg| \sum_{i=1}^n f(x_i) - n \E_\pi f \bigg| \ge \gamma \right] \le 4 N_q \exp\left[-\bigg(\frac{\gamma}{||f||_\infty}\bigg)^2 \frac{\epsilon}{20n} \right].
  \end{equation}

  Here we have graph weights $w_{ij} = \frac 1 d$ for all $i,j$ and $w_x=1$, thus $\nu = 1$ and $\pi(x) = \frac 1 V$. Initial distribution $q$ is uniform distribution over $G$, therefore $N_q = 1$. Function $f(x) = \chi(x)$ obviously has $||f||_\infty \le 1$. We also bound (\ref{eq:gillman-chernoff}) with some small probability, e.g. $\frac 1 {|G|}$. Then (\ref{eq:gillman-chernoff}) becomes
  \[
  \Pr \bigg[ \big| \sum_{i=1}^t f(x_i) - t \E_\pi f \big| \ge \gamma \bigg] \le 4\exp \bigg[ - \frac {\gamma^2 \epsilon} {20t} \bigg] \le \frac 1 {|G|}.
  \]

  Solving with respect to $t$ gives us:
  \[
  t \ge \frac {20}{(1-\lambda) \delta} \ln(4 |G|) = O(\log |G|).
  \]

  If we make a random walk of length $t = O(\log |G|)$, we will get a quantum hash function with high probability.
\end{proof}

So, construction of this quantum hash function requires only $O(\log |G|)$ bits of randomness if underlying expander graph is chosen carefully.

\begin{cor}
  For all $n$ and $\delta \in (0;\frac 1 2)$ there exist a $\delta$-resistant $(\log n;\log t + 1)$ quantum hash function with $t \ge \frac{160 \sqrt 2}{3 \delta} \ln(4 n)$.
\end{cor}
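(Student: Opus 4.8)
The plan is to derive the corollary as a direct instantiation of Theorem~\ref{thm:t-expander} by choosing a concrete family of expander graphs and substituting its parameters into the bound $t > O(\frac{\log|G|}{\delta})$. First I would set $G = \ZZ_n$ (or any group with $|G| = n$), so that $\log|G| = \log n$ and the message space has size $n$. The theorem then guarantees a $\delta$-resistant $(\log n; \log t)$ quantum hash function provided $t \ge \frac{20}{(1-\lambda)\delta}\ln(4n)$. The extra ``$+1$'' in the output register size $\log t + 1$ and the specific constant $\frac{160\sqrt 2}{3}$ must come from committing to a particular expander family and bookkeeping the resulting constants, so the real content of the proof is just picking $\lambda$ explicitly.

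Concretely, I would invoke a known explicit construction of Ramanujan or near-Ramanujan expanders for which the second eigenvalue $\lambda$ is pinned down. For a $d$-regular Ramanujan graph one has $\lambda \le \frac{2\sqrt{d-1}}{d}$, and choosing a small concrete degree (the constant $\frac{160\sqrt 2}{3}$ strongly suggests $d = 3$, giving $\lambda \le \frac{2\sqrt 2}{3}$ and hence $1 - \lambda \ge 1 - \frac{2\sqrt 2}{3} = \frac{3 - 2\sqrt 2}{3}$) turns the abstract $\frac{1}{1-\lambda}$ factor into an explicit number. Substituting this value of $1-\lambda$ into $\frac{20}{(1-\lambda)\delta}\ln(4n)$ should produce the stated coefficient $\frac{160\sqrt 2}{3\delta}$ after rationalizing, which I would verify by a short arithmetic check rather than spell out here.

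The ``$+1$'' in the dimension $\log t + 1$ I would explain by the bipartization remark in the text: since the corollary wants a clean bipartite expander (every expander can be converted to a bipartite one by doubling the vertex set), this doubling costs one extra bit in the register that indexes the walk, hence $\log t$ becomes $\log t + 1$. Alternatively it may simply absorb a rounding of $\log t$ to an integer number of qubits; either interpretation gives the same asymptotic statement. I would state which convention is in force so the dimension count is unambiguous.

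The main obstacle, and the only place where genuine care is needed, is matching the explicit constant: one must confirm that the chosen degree and the corresponding Ramanujan bound on $\lambda$ yield \emph{exactly} $\frac{160\sqrt 2}{3}$ rather than a nearby value, and that an explicit $d$-regular Ramanujan family on $n$ vertices exists for all $n$ (which in general requires allowing multigraphs, self-loops, or a near-Ramanujan relaxation, consistent with the definition's allowance of self-loops and multiple edges). Everything else is a mechanical substitution into the already-proved theorem, so I would keep the proof of the corollary to a few lines: fix the expander family, quote its $\lambda$, substitute, and simplify.
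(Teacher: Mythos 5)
Your high-level plan --- instantiate Theorem~\ref{thm:t-expander} with a concrete expander family and read the constant off the bound $t \ge \frac{20}{(1-\lambda)\delta}\ln(4|G|)$ --- is exactly the paper's strategy, but you commit to the wrong family, and the ``short arithmetic check'' you explicitly deferred is precisely the step that fails. The paper's (one-sentence) proof instantiates with the Margulis construction: an $(8,\frac{5\sqrt 2}{8})$-expander graph on $n^2$ vertices, i.e.\ the group $\ZZ_n^2$ with its characters. The factorization $160 = 20\cdot 8$ together with the $\sqrt 2$ coming from $\lambda=\frac{5\sqrt2}{8}$ is what the stated coefficient encodes; there is no degree-$3$ graph behind it. With your choice of a $3$-regular Ramanujan graph, the best guarantee available is $\lambda \le \frac{2\sqrt 2}{3}$, so the theorem's sufficient condition becomes
\[
t \;\ge\; \frac{20}{\left(1-\frac{2\sqrt2}{3}\right)\delta}\ln(4n) \;=\; \frac{60}{(3-2\sqrt 2)\delta}\ln(4n) \;=\; \frac{60(3+2\sqrt 2)}{\delta}\ln(4n) \;\approx\; \frac{350}{\delta}\ln(4n),
\]
whereas the corollary asserts that $t \ge \frac{160\sqrt 2}{3\delta}\ln(4n) \approx \frac{75.4}{\delta}\ln(4n)$ already suffices. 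For $t$ between these two thresholds your instantiation proves nothing, so the corollary with its explicit constant does not follow from your argument. Since matching the constant was, by your own account, the only non-mechanical step, this is a genuine gap and not a bookkeeping detail.

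Two further problems attach to the degree-$3$ choice. First, explicit Ramanujan families (Lubotzky--Phillips--Sarnak or Margulis type) exist only for special degrees and special vertex counts, so ``a $3$-regular Ramanujan graph on $n$ vertices for every $n$'' is not something you can quote; the Margulis expander on $\ZZ_n\times\ZZ_n$ used by the paper is explicit and exists for every $n$, which is what lets the corollary quantify over all $n$. Second, your fallback to bipartite graphs --- both as an existence crutch (Marcus--Spielman--Srivastava give bipartite Ramanujan graphs) and as your explanation of the ``$+1$'' --- backfires under this paper's definitions: a bipartite $d$-regular graph has $\lambda_n=-d$, so $\lambda(\Gamma)=\max\{|\lambda_2|,|\lambda_n|\}=d$, the gap $1-\lambda$ vanishes, and Gillman's bound (\ref{eq:gillman-chernoff}), which is the engine of Theorem~\ref{thm:t-expander}, becomes vacuous --- a walk on a bipartite graph never mixes. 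Whatever convention produces the paper's ``$+1$'' (the paper ties it to the $n^2$-vertex Margulis graph and the group $\ZZ_n^2$, with admittedly terse bookkeeping of its own), it cannot be obtained by bipartizing the expander, because bipartization destroys the very spectral property the whole construction rests on.
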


\begin{proof}
  We use Margulis construction \cite{Hoory2006} of $(8;\frac{5 \sqrt 2}{8})$ expander graph with $n^2$ vertices and character of group $\ZZ_n^2$.
\end{proof}

\section{Extractors for Quantum Hashing}\label{sec:extractor-qhf}

\begin{defn}
  Let $\Ext: G \times \Xzo^d \to H$ be a $(k;\epsilon)$ extractor function. Let $t$ and $s_i \in G, i \in \{1,\ldots,t\}$ be parameters. We choose them in Theorem \ref{thm:extractor-qhf}. Denote $S = \{ s_i \}$.

  We define a quantum hash function $\EE$ based on extractor $\Ext$ as follows. 

  \[
  \ExtF(g) = \sum_{i=1}^t \sum_{j=1}^{2^d} \chi(\Ext(g \circ s_i,j)) \ket{j} \ket{i}.
  \]
\end{defn}

Intuitively, we start from several vertices and move along all incident edges simultaneously.

Parameters $t$, $s_i$ depend on security parameter $\epsilon$. Let us choose it.

\begin{thm}\label{thm:extractor-qhf}
  If $\Ext$ is a $(k,\epsilon)$ extractor, parameter $t > \frac{\log |H| + 1}{2\epsilon^2} ||\chi||_\infty$ and $s_i$ are chosen according to distribution $X$ with $H_\infty(X) \ge k$, then $\EE_{\Ext}$ is an $\epsilon$-resistant $(n;d + \log t)$ quantum hash function. 
\end{thm}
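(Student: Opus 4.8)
The plan is to mirror the structure of the proof of Theorem \ref{thm:t-expander}: first reduce the overlap $|\braket{\ExtF(g)}{\ExtF(g')}|$ to a single character sum, then control that sum using the extractor guarantee together with a concentration bound over the $t$ independent choices of $s_i$. First I would expand the inner product. Using that $\chi$ is a character of $H$, so that $\overline{\chi(a)}=\chi(a^{-1})$ and $\chi(a^{-1})\chi(b)=\chi(a^{-1}\bullet b)$, I would write
\[
\braket{\ExtF(g)}{\ExtF(g')} = \sum_{i=1}^t\sum_{j=1}^{2^d} \chi\!\left(\Ext(g\circ s_i,j)^{-1}\bullet \Ext(g'\circ s_i,j)\right),
\]
so that, after dividing by the norm $\|\ExtF(g)\|$, the resistance condition (\ref{eq:qhf-resistance}) becomes a bound on an \emph{average} of character values. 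This is exactly the situation in the expander case, where the overlap collapsed to $|\sum_k\chi(s_k^{-1}\odot g''\odot s_k)|$; the only difference is that the character is now evaluated at an element of $H$ produced by $\Ext$ rather than on a conjugate in $G$.

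The key step is to bound this average using the extractor. Since $H_\infty(X)\ge k$ and min-entropy is invariant under the left shifts $s\mapsto g\circ s$ and $s\mapsto g'\circ s$, the distributions $g\circ X$ and $g'\circ X$ both have min-entropy $\ge k$; hence, with $j$ uniform and independent, $\Ext(g\circ X,j)$ and $\Ext(g'\circ X,j)$ are each $\epsilon$-close to $U_H$. Because a nontrivial character averages to $0$ over $U_H$, and $\epsilon$-closeness moves the expectation of a function of sup-norm $\|\chi\|_\infty$ by at most $O(\epsilon\|\chi\|_\infty)$, the expectation $\E_{s\sim X,\,j\sim U_d}$ of a single term is $O(\epsilon\|\chi\|_\infty)$. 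Treating the $t$ summands indexed by $i$ as i.i.d. samples of a bounded random variable, I would then apply Hoeffding's inequality to show the empirical average concentrates within $\epsilon$ of this small expectation, with failure probability at most $2\exp\!\left(-2t\epsilon^2/\|\chi\|_\infty\right)$. Setting this below $1/(2|H|)$ and solving for $t$ reproduces the stated threshold $t>\frac{\log|H|+1}{2\epsilon^2}\|\chi\|_\infty$, and a union bound over the (at most $|H|$) relevant events then yields a choice of $S=\{s_i\}$ for which $|\braket{\ExtF(g)}{\ExtF(g')}|\le\epsilon$ holds simultaneously for all distinct $g,g'$. Finally, since the registers $\ket{j}\ket{i}$ span a space of dimension $t\,2^d$ and $\log|G|=n$, the map $\ExtF$ is an $\epsilon$-resistant $(n;d+\log t)$ quantum hash function, as claimed.

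The main obstacle is the passage from the single-source extractor guarantee to a bound on the expectation of the \emph{product} term $\chi(\Ext(g\circ s,j)^{-1}\bullet\Ext(g'\circ s,j))$: the extractor only promises that each of $\Ext(g\circ s,j)$ and $\Ext(g'\circ s,j)$ is \emph{marginally} close to uniform, whereas the cross term vanishes only if the \emph{pair} is close to a product of two independent uniforms, and these two outputs are correlated through the shared source $s$ and the shared seed $j$. Making this rigorous is the crux. One route is to charge the stronger strong-extractor property, appending the seed so that $U_d\circ\Ext(g\circ X,U_d)$ is close to $U_d\circ U_H$ and thereby decoupling the seed register; another is to exploit additional structure of $\Ext$ guaranteeing that the difference $\Ext(g\circ s,j)^{-1}\bullet\Ext(g'\circ s,j)$ is itself spread out over $H$. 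I would isolate this as a separate lemma and state explicitly which extractor property is being used, since the clean Hoeffding-plus-union-bound accounting above hinges entirely on the smallness of this expectation.
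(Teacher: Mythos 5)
Your proposal follows essentially the same route as the paper's proof: expand $|\braket{\ExtF(g)}{\ExtF(g')}|$ into the double sum of character values of the cross terms $\Ext(g\circ s_i,j)^{-1}\bullet\Ext(g'\circ s_i,j)$, treat the $t$ choices of $s_i$ as independent samples of a bounded random variable, apply Hoeffding's inequality, bound the failure probability by roughly $1/|H|$, and solve for $t$; your threshold reproduces the one in the statement. The difference is that you are more careful than the paper precisely where care is needed. The crux you isolate at the end --- that the extractor guarantees only that each of $\Ext(g\circ s,j)$ and $\Ext(g'\circ s,j)$ is \emph{marginally} $\epsilon$-close to uniform, whereas the cross term requires control of the \emph{joint} distribution of two outputs correlated through the shared source $s$ and seed $j$ --- is a genuine gap, and the paper does not close it either. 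In fact, the paper's proof never invokes the extractor property or the min-entropy hypothesis at all: it defines $Y_i = \E_{U_d}[|\chi(\Ext(X_i,U_d))|]$, the average of the \emph{absolute value} of a character of a single output. Since characters of a finite group take values that are roots of unity, $|\chi(\cdot)|\equiv 1$, so $Y_i=1$ deterministically, the Hoeffding step is vacuous, and in any case $Y_i$ is not the quantity appearing in the expansion (which involves cross terms, not single outputs). So your closing paragraph identifies exactly the missing lemma: one needs either a strong-extractor property (append the seed, so $U_d\circ\Ext(g\circ X,U_d)$ is close to $U_d\circ U_H$ and the seed register decouples) or structural assumptions on $\Ext$ and $\chi$ forcing the difference term itself to be spread over $H$. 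Neither your sketch nor the paper supplies that lemma; your account is the more honest one about where the unfilled step lies.
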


\begin{proof}
  It is sufficient to prove that for any $g' \neq g$
  \begin{align*}
    \bigg| \braket{\ExtF(g)}{\ExtF(g')} \bigg| &= \bigg| \sum_{i=1}^t \sum_{j=1}^{2^d} \chi(\Ext(g \circ s_i, j)^{-1} \bullet \Ext(g' \circ s_i, j) ) \bigg| \le \\
    & \le \sum_{i=1}^t \sum_{j=1}^{2^d} | \chi(\Ext(g \circ s_i, j)^{-1} \bullet \Ext(g' \circ s_i, j) ) | < \epsilon.
  \end{align*}
  
  Define $X_i$ to be a distribution of (random variable) $s_i$. Let $Y_i$ be a random variable $\E_{U_d}[|\chi(\Ext(X_i,U_d))|]$.

  It is easy to see that $Y_i \le ||\chi||_\infty = 1$.

  Then by Hoeffding's inequality:
  \[
  Pr \Bigg[ \bigg| \frac 1 t \sum_{i=1}^t Y_i - \E \Big[ \frac 1 t \sum_{i=1}^t Y_i \Big] \bigg| \ge \epsilon \Bigg] \le 2 \exp \bigg( - 2t\epsilon^2 \bigg).
  \]

  Bounding this probability by $\frac 1 {|H|}$ and solving with respect to $t$ gives
  \[
  t \ge \frac{\log|H| + 1}{2\epsilon^2}.
  \]
\end{proof}

Note that selecting parameters $S$ requires $O(\log |G| \times \log |H|)$ random bits.

\begin{cor}
  For every $\epsilon > 0$, $\alpha > 0$ and all positive integers $n, k$ there exist an $\epsilon$-resistant $(n; \log t + d + 1)$ quantum hash function, where $t \ge \frac{m+1}{2\epsilon^2}$, $d = O(\log n + \log (1 / \epsilon))$ and $m \ge (1-\alpha) k$.
\end{cor}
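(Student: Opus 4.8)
The plan is to instantiate Theorem~\ref{thm:extractor-qhf} with a concrete near-optimal seeded extractor and then simply read off the parameters; once the right extractor is named, the whole statement reduces to a substitution. To set up, I would fix the groups $G = \ZZ_2^n$ and $H = \ZZ_2^m$, both under addition modulo $2$, and take $\chi$ to be a nontrivial additive character of $H$, so that $\|\chi\|_\infty = 1$ and $\log|H| = m$. I would draw each $s_i$ from the uniform distribution on $\Xzo^n$; this distribution has $H_\infty = n \ge k$, hence it satisfies the min-entropy hypothesis of Theorem~\ref{thm:extractor-qhf}.

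Next I would select the extractor itself. What is required is a $(k,\epsilon)$-extractor $\Ext: \Xzo^n \times \Xzo^d \to \Xzo^m$ with seed length $d = O(\log n + \log(1/\epsilon))$ and output length $m \ge (1-\alpha)k$ for the given constant $\alpha > 0$. Explicit constructions achieving these near-optimal parameters (logarithmic seed length while extracting a $(1-\alpha)$ fraction of the available min-entropy) are standard, and I would cite one from the survey of \citet{Shaltiel2011}.

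With these choices the corollary follows by plugging into Theorem~\ref{thm:extractor-qhf}. Substituting $\log|H| = m$ and $\|\chi\|_\infty = 1$, the theorem's requirement $t > \frac{\log|H|+1}{2\epsilon^2}\|\chi\|_\infty$ becomes exactly $t \ge \frac{m+1}{2\epsilon^2}$, and the theorem then certifies that $\ExtF$ is an $\epsilon$-resistant $(n; d + \log t)$ quantum hash function. The extra additive constant in the stated output size $\log t + d + 1$ simply accounts for rounding $\log t$ up to an integer number of qubits, since $\ExtF$ lives in a space of dimension $2^d \cdot t$.

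The main obstacle is really bookkeeping rather than mathematics: one must exhibit (or cite) an extractor whose error parameter coincides with the target resistance $\epsilon$ and whose entropy loss is at most $\alpha k$, and then confirm that its seed length obeys the claimed $O(\log n + \log(1/\epsilon))$ bound. Once such a construction is in hand, no further calculation is needed and the corollary is immediate from Theorem~\ref{thm:extractor-qhf}.
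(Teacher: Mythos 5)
Your proposal is correct and follows essentially the same route as the paper: both instantiate Theorem~\ref{thm:extractor-qhf} with an explicit near-optimal seeded extractor achieving $d = O(\log n + \log(1/\epsilon))$ and $m \ge (1-\alpha)k$ (the paper cites the Guruswami--Umans--Vadhan construction directly, which is the standard construction you allude to via \citet{Shaltiel2011}). Your additional bookkeeping --- fixing $G = \ZZ_2^n$, $H = \ZZ_2^m$, a nontrivial additive character, and uniform $s_i$ to meet the min-entropy hypothesis --- only makes explicit what the paper leaves implicit.
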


\begin{proof}
  \citet{Guruswami2009} proved that for every $\alpha > 0$ and all positive integers $n,k$ and all $\epsilon > 0$ there is an explicit construction of a $(k;\epsilon)$ extractor $E: \Xzo^n \times \Xzo^d \to \Xzo^m$ with $d  = O(\log n + \log (1 / \epsilon))$ and $m \ge (1-\alpha) k$.

  Quantum hash function $\EE_{E,t}$ is the required function.
\end{proof}




\section{Keyed quantum hash functions}\label{sec:keyed-qhf}

Classical message authentication codes (MAC) have wide range of applications. They are defined as a triple of algorithms: $G$ that generates a key, $S$ that uses the key and the message to generate a tag of the message, and $V$ that uses the key, the message and the tag to verify message integrity.

Formally, $G : 1^n \to K$, where $n$ is a a security parameter and $K$ is a set of all possible keys, $S: K \times X \to T$, where $X$ is a set of messages and $T$ is a set of tags and $V: K \times X \times T \to \{\mathrm{Acc}, \mathrm{Rej}\}$.

We require the following property for MAC to be a sound system:
\begin{equation}\label{eq:mac-sound}
  \forall n, \forall x \in X: k = G(1^n), V \big( k, x, S(k, x) \big) = \mathrm{Acc},
\end{equation}
i.e. that verifier always accepts a generated tag.

We also require that MAC is a secure system and for any adversary $A$ that can query MAC:
\begin{equation}\label{eq:mac-secure}
  \forall n, k \notin \mathrm{Query}(A), (x,t) \gets A(S), \Pr \big[ V(k, x, t) = \mathrm{Acc} \big] \le \mathrm{negl}(n),
\end{equation}
i.e. any adversary that can query MAC outputs correct tag for some key that was not queried and some message with negligible probability.

One classical construction of MAC is hash-based MAC (also known as keyed hash functions). Basically, keyed hash function is a function $H(k,x)$, such that $H(k,\cdot)$ is a cryptographic hash function for every $k$. It is easy to see that such function can be used as MAC.

With the same considerations as in Section \ref{sec:qhf}, we define these algorithms to be the following.

\begin{defn}
  An $(\epsilon,\delta)$ keyed quantum hash function is a quantum function $S$, such that

  A function $S$ accepts a key $k \in K$ and a message $x \in X$ and outputs a quantum tag for $x$: $S: K \times X \to T = \HH{t}$.

  We require soundness, i.e. tags should be different for different messages under the same key.
  \[
  \forall k \in K, \forall x \in X, \forall y \neq x: \braket{S(k,x)}{S(k,y)} < \epsilon.
  \]
  For $x = y$ we get $\braket{S(k,x)}{S(k,x)} = 1$.

  We also require unforgeability:
  \[
  \forall k \in K, k \notin \mathrm{Query}(A), (x,t) \gets A(S), \Pr \big[ \braket{t}{S(k,x)} \ge \epsilon) \big] \le \delta,
  \]
  where $A$ is arbitrary attacker that can query $S$ and $\mathrm(Query)(A)$ is a set of queries made.
\end{defn}

Informally, keyed quantum hash function outputs a tag for a message. If someone changes a message, then the verification step fails with high probability. If an attacker Eve can query a keyed quantum hash function, access to a function doesn't help her to forge a tag for some message with some (unqueried) key.

\begin{thm}\label{thm:extractor-keyed-qhf}
  Let us define an extractor-based keyed quantum hash function as follows. Let $\Ext: \Xzo^n \times \Xzo^d \to \Xzo^m$ be a $(k,b,\epsilon)$ extractor against $b$ quantum storage and $b > r (d + \log t)$ .

  Then a function
  \[
  \ExtKF(\key,g) = \sum_{i=1}^t \sum_{j=1}^{2^d} \chi(\Ext(g \circ \key \circ s_i,j)) \ket{j}\ket{i}
  \]
  is a $(\epsilon; \epsilon + \epsilon^{2^s+1})$ keyed quantum hash function secure against an attacker $A$ with access to $r$ queries to $\EE_{\Ext}$.
\end{thm}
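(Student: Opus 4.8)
The plan is to verify the two defining clauses of an $(\epsilon,\delta)$ keyed quantum hash function separately: soundness (small inner product between the tags of distinct messages under a fixed key) and unforgeability. The soundness clause reduces immediately to Theorem \ref{thm:extractor-qhf}. Fixing the key $\key$ turns $\ExtKF(\key,\cdot)$ into an ordinary extractor-based hash function: since $\key$ is a fixed group element, the map $g \mapsto g \circ \key$ is a bijection of $G$, so $\ExtKF(\key,g)$ is exactly $\ExtF$ evaluated at $g \circ \key$ with the same seeds $s_i$. For $x \neq y$ we have $x \circ \key \neq y \circ \key$, so Theorem \ref{thm:extractor-qhf} applies verbatim and gives $|\braket{\ExtKF(\key,x)}{\ExtKF(\key,y)}| < \epsilon$ for every key, while $\braket{S(\key,x)}{S(\key,x)} = 1$ is immediate from $|\chi| = 1$.

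The substance of the theorem is unforgeability, and the idea is to turn a successful forger into a distinguisher for the extractor against quantum storage. First I would observe that each of the $r$ oracle responses lives in the tag space $\HH{d+\log t}$, so the whole transcript the attacker retains is a quantum encoding of at most $r(d+\log t) < b$ qubits; this is exactly why the hypothesis $b > r(d+\log t)$ is imposed, as it certifies that the attacker's side information is a legitimate $(n,b)$ quantum encoding $\rho$. Treating the fresh, unqueried key as the min-entropy source $X$ with $H_\infty(X) \ge k$, the $(k,b,\epsilon)$ extractor-against-quantum-storage property says that $\Ext(x \circ \key \circ s_i,\cdot)$ together with $\rho$ is $\epsilon$-indistinguishable from uniform together with $\rho$. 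Consequently the genuine tag $S(\key,x)$ is, to any POVM the attacker and verifier can jointly apply, $\epsilon$-indistinguishable from the idealized tag $\tilde S = \frac{1}{\sqrt{2^s}} \sum_{i=1}^{t}\sum_{j=1}^{2^d} \chi(u_{ij})\ket{j}\ket{i}$ built from independent uniform outputs $u_{ij}$, where $2^s = 2^d t$. The forgery check is one such POVM, so
\[
\Pr\big[\,|\braket{t}{S(\key,x)}| \ge \epsilon\,\big] \le \Pr\big[\,|\braket{t}{\tilde S}| \ge \epsilon\,\big] + \epsilon .
\]

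It then remains to bound the ideal probability by $\epsilon^{2^s+1}$. Since the $2^s$ phases $\chi(u_{ij})$ are independent and uniform over the roots of unity in the image of $\chi$, the quantity $\braket{t}{\tilde S} = \frac{1}{\sqrt{2^s}}\sum_{i,j}\overline{t_{ij}}\,\chi(u_{ij})$ is a normalized sum of independent random phases, and forcing its modulus above the threshold $\epsilon$ is an anti-concentration (small-ball) event. I would argue that its probability picks up roughly one factor of order $\epsilon$ per coordinate, which produces the claimed $\epsilon^{2^s+1}$; adding the two contributions yields $\delta = \epsilon + \epsilon^{2^s+1}$.

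I expect two places to be the real obstacles. The first is the reduction of the second paragraph: the extractor-against-quantum-storage guarantee is stated for a \emph{single} uniformly chosen seed $U_d$, whereas the tag coherently superposes over all $2^d$ seeds at once, so the hard part will be to argue that the coherent seed register may be treated as the uniform seed of the definition without discarding the interference that the verification test relies on — equivalently, that \emph{joint} uniformity of the entire output table, not merely marginal uniformity of one random output, is what the forgery check actually probes. The second, and technically sharper, difficulty is making the bound $\epsilon^{2^s+1}$ rigorous and uniform over all attacker outputs $t$: a crude second-moment estimate only yields a polynomial $1/(\epsilon^2 2^s)$ bound, so obtaining exponential-in-$2^s$ decay requires a genuine small-ball inequality for sums of independent phases combined with a covering argument over the attacker's possible guesses.
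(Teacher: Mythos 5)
Your decomposition --- soundness via Theorem \ref{thm:extractor-qhf} (using that $g \mapsto g \circ \key$ is a bijection of $G$), and unforgeability split into a query-phase reduction plus a no-query guessing bound --- is exactly the structure of the paper's proof, and your reduction (the $r$ retained tags form a quantum encoding of $r(d+\log t) < b$ qubits, the unqueried key plays the role of the min-entropy source) is a more careful rendering of what the paper disposes of in two sentences; the paper is equally silent on your first ``obstacle'' about the coherent superposition over seeds. The genuine divergence, and the place where your proposal stops short of the claimed bound, is the last step. The paper does not do any per-coordinate anti-concentration of random phases: it treats the no-query attacker as outputting a fixed unit vector in the tag space and bounds the success probability by a purely geometric volume ratio --- the volume of a ball of radius $\epsilon$ over the volume of the whole space,
\[
p = \frac {c\, \epsilon^{2^s + 1}} {c\, (1 + \epsilon)^{2^s+1}} \le \epsilon^{2^s+1},
\]
and this one line is the entire source of the exponent $2^s+1$ in the statement. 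So the idea you are missing, relative to the paper, is this volume-ratio argument; no small-ball inequality or covering argument over attacker outputs appears there.

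That said, your diagnosis of the difficulty is correct, and it in fact cuts against the paper's own step. The indistinguishability you derive from the extractor licenses replacing the true tag by your random-phase state $\tilde S$, not by a uniformly random state on the sphere; and for $\tilde S$ the overlap $\braket{t}{\tilde S}$ is a normalized sum of $2^s$ independent bounded mean-zero terms, so the best generic estimate is Hoeffding-type, $\Pr\left[ |\braket{t}{\tilde S}| \ge \epsilon \right] \le \exp\left(-\Omega(\epsilon^2 2^s)\right)$, which for small $\epsilon$ is far larger than $\epsilon^{2^s+1}$ --- so your route provably cannot reach the stated exponent, as you suspected. The paper's geometric picture avoids this only by conflating the forgery event $|\braket{t}{S(\key,x)}| \ge \epsilon$ (a large set: the complement of a neighborhood of the equatorial subsphere orthogonal to $t$) with confinement to an $\epsilon$-ball around the true tag, and by tacitly assuming the true tag is uniform on the unit sphere, which the extractor guarantee does not provide. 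In short: you reproduce the paper's proof except for its final step, which you replace by an honestly flagged obstacle; the paper fills that step with the volume heuristic above, not with the machinery you anticipate, and your analysis is evidence that the heuristic does not withstand scrutiny.
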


\begin{proof}
  We have to prove two claims. First, for any $k,x$ and $x' \neq x$, it holds that $\braket{\ExtKF(k,x)}{\ExtKF(k,x')} < \epsilon$. Second, for any attacker $A$ and any $k \notin \mathrm{Query}(A)$ attacker output $x,t$ such that $\braket{t}{\ExtKF(k,x)} \ge \epsilon$  with negligible probability.

  The first claim is implied by Theorem \ref{thm:extractor-qhf}.

  To prove the second claim we note that access to hash function doesn't help attacker to output correct tag. Proof by contradiction. Suppose $A$ to be such attacker. Then we can distinguish between $\Ext(X,U_d)$ and $U_m$ using a $r (\log t + d)$ qubits. But $r (\log t + d) < b$ that contradicts the fact that $\Ext$ is an extractor against $b$ quantum storage.

  Then attacker should output the tag without access to hash function. This is equal to outputting a state that is close to correct tag. Then the probability of correct guessing $p$ is a ratio of the volume of sphere with radius $\epsilon$ to the volume of the whole space:
  \[
  p = \frac {c \epsilon^{2^s + 1}} {c (1 + \epsilon)^{2^s+1}} \le \epsilon^{2^s+1}.
  \]
\end{proof}

\begin{cor}
  For all positive integers $k,n$ and all $c > 0$ there exist a $(N^{-c}; 2 N^{-c})$ keyed quantum hash function.
\end{cor}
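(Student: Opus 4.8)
The plan is to instantiate Theorem~\ref{thm:extractor-keyed-qhf} with a concrete extractor against quantum storage together with a single deliberate choice of the error parameter $\epsilon$, and then check that the two resulting security parameters collapse to the claimed pair. First I would invoke an explicit construction of a $(k,b,\epsilon)$ strong extractor against quantum storage $\Ext: \Xzo^n \times \Xzo^d \to \Xzo^m$ from the reference already cited in the excerpt (\citet{Ta-Shma2009}), and set $\epsilon = N^{-c}$, where $N$ denotes the size of the message space $X$. With this extractor fixed, the function $\ExtKF$ constructed in Theorem~\ref{thm:extractor-keyed-qhf} is, by that theorem, an $(\epsilon;\, \epsilon + \epsilon^{2^s+1})$ keyed quantum hash function, provided the storage bound satisfies $b > r(d + \log t)$.

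It then remains to verify that $(\epsilon;\, \epsilon + \epsilon^{2^s+1})$ matches $(N^{-c};\, 2N^{-c})$. The soundness parameter is exactly $\epsilon = N^{-c}$ by the choice above, so the first coordinate is immediate. For the unforgeability parameter, since $\epsilon \in (0,1)$ and the exponent satisfies $2^s + 1 \ge 1$, we have $\epsilon^{2^s+1} \le \epsilon$, whence $\epsilon + \epsilon^{2^s+1} \le 2\epsilon = 2N^{-c}$. Thus the forgery probability is bounded by $2N^{-c}$, and no further estimation is needed for this step; it is purely arithmetic given Theorem~\ref{thm:extractor-keyed-qhf}.

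The step I expect to be the main obstacle is not this arithmetic but matching the extractor's parameters to the attacker model. Theorem~\ref{thm:extractor-keyed-qhf} demands $b > r(d + \log t)$, so the chosen extractor must tolerate a quantum storage bound large enough to absorb $r$ queries to the hash function, each of description length $d + \log t$ qubits. One must therefore confirm that an explicit extractor against quantum storage with error $N^{-c}$ and a storage parameter $b$ growing suitably in the number of queries $r$ actually exists in the stated range of $k$ and $n$; the quantitative construction of \citet{Ta-Shma2009} is what supplies such parameters. The remaining verification is to substitute the value $t \ge (\log|H|+1)/(2\epsilon^2)$ from Theorem~\ref{thm:extractor-qhf} into the inequality $b > r(d + \log t)$ and check that the storage budget of the chosen extractor indeed accommodates it for $\epsilon = N^{-c}$.
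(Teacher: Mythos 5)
Your proposal is correct and follows essentially the same route as the paper: instantiate Theorem~\ref{thm:extractor-keyed-qhf} with an explicit extractor against quantum storage at error $\epsilon = N^{-c}$, then observe that $\epsilon + \epsilon^{2^s+1} \le 2\epsilon = 2N^{-c}$. The only difference is the choice of explicit construction --- the paper cites the near-optimal extractor of \citet{De2009} (with $d = O(\log^4 n)$ and $m = \Omega(\alpha N - b)$, so the storage bound $b$ can be taken large enough to absorb $r(d+\log t)$) rather than \citet{Ta-Shma2009}, but either instantiation supports the argument.
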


\begin{proof}
  \citet{De2009} proved that for every $\alpha, c > 0$ there exist an explicit $(\alpha N,b,N^{-c})$ extractor $E: \Xzo^N \times \Xzo^d \to \Xzo^m$ against $b$ quantum storage with $d = O(\log^4 n)$ and $m = \Omega (\alpha N - b)$.
\end{proof}






\section{Open problems}

Groups that we considered here and all constructions known to us use finite groups or sets and hash input strings of finite lengths.

\begin{question}
  Can quantum hash functions be constructed for infinite groups?
\end{question}

On the one hand, even one qubit can store arbitrary length binary string. On the other hand, the measurement of one qubit can't result in more than one classical bit of information.

And ``dual'' question about infinite strings.

\begin{question}
  Can quantum hash functions work on infinite input strings (i.e. $\Xzo^*$)?
\end{question}

This problem seems to be easier, but it probably requires careful analysis.

Another interesting line of research would be improving keyed quantum hash function.

\begin{question}
  Can keyed quantum hash function be secure against an attacker with unlimited number of queries?
\end{question}



\begin{thebibliography}{21}
\providecommand{\natexlab}[1]{#1}
\providecommand{\url}[1]{\texttt{#1}}
\expandafter\ifx\csname urlstyle\endcsname\relax
  \providecommand{\doi}[1]{doi: #1}\else
  \providecommand{\doi}{doi: \begingroup \urlstyle{rm}\Url}\fi

\bibitem[Ablayev and Ablayev(2015)]{Ablayev2015}
F~Ablayev and M~Ablayev.
\newblock {On the concept of cryptographic quantum hashing}.
\newblock \emph{Laser Physics Letters}, 12\penalty0 (12):\penalty0 125204,
  2015.
\newblock ISSN 1612-2011.
\newblock \doi{10.1088/1612-2011/12/12/125204}.
\newblock URL
  \url{http://stacks.iop.org/1612-202X/12/i=12/a=125204?key=crossref.2f281c688485095be58bebb58f8dad75
  http://iopscience.iop.org/article/10.1088/1612-2011/12/12/125204/meta}.

\bibitem[Ablayev and Ablayev(2014)]{Ablayev2014}
Farid Ablayev and Marat Ablayev.
\newblock {Quantum Hashing via Classical epsilon-universal Hashing
  Constructions}.
\newblock \emph{arXiv}, pages 1--14, 2014.
\newblock URL \url{http://arxiv.org/abs/1404.1503}.

\bibitem[Ablayev and Vasiliev(2008)]{Ablayev2008}
Farid Ablayev and Alexander Vasiliev.
\newblock {On the Computation of Boolean Functions by Quantum Branching
  Programs via Fingerprinting}.
\newblock In \emph{Electronic Colloquium on Computational Complexity},
  volume~59, 2008.

\bibitem[Barnum et~al.(2002)Barnum, Crepeau, Gottesman, Smith, and
  Tapp]{Barnum2002}
H.~Barnum, C.~Crepeau, D.~Gottesman, A.~Smith, and A.~Tapp.
\newblock {Authentication of quantum messages}.
\newblock \emph{The 43rd Annual IEEE Symposium on Foundations of Computer
  Science, 2002. Proceedings.}, pages 449--458, 2002.
\newblock ISSN 0272-5428.
\newblock \doi{10.1109/SFCS.2002.1181969}.
\newblock URL
  \url{http://ieeexplore.ieee.org/lpdocs/epic03/wrapper.htm?arnumber=1181969}.

\bibitem[Barnum(2001)]{Barnum2001}
Howard Barnum.
\newblock {Quantum message authentication codes}.
\newblock \emph{Quantum}, pages 1--18, 2001.
\newblock URL \url{http://arxiv.org/abs/quant-ph/0103123}.

\bibitem[Buhrman et~al.(2001)Buhrman, Cleve, Watrous, and de~Wolf]{Buhrman2001}
H~Buhrman, R~Cleve, J~Watrous, and R~de~Wolf.
\newblock {Quantum fingerprinting.}
\newblock \emph{Physical review letters}, 87\penalty0 (16):\penalty0 167902,
  2001.
\newblock ISSN 0031-9007.
\newblock \doi{10.1103/PhysRevLett.87.167902}.

\bibitem[Charles et~al.(2009)Charles, Lauter, and Goren]{Charles2009}
Denis~X. Charles, Kristin~E. Lauter, and Eyal~Z. Goren.
\newblock {Cryptographic hash functions from expander graphs}.
\newblock \emph{Journal of Cryptology}, 22\penalty0 (1):\penalty0 93--113,
  2009.
\newblock ISSN 09332790.
\newblock \doi{10.1007/s00145-007-9002-x}.

\bibitem[Curty and Santos(2013)]{Curty2001}
Marcos Curty and David~J Santos.
\newblock {Quantum authentication of classical messages}.
\newblock 2013.

\bibitem[De and Vidick(2009)]{De2009}
Anindya De and Thomas Vidick.
\newblock {Near-optimal extractors against quantum storage}.
\newblock 2009.
\newblock URL \url{http://arxiv.org/abs/0911.4680}.

\bibitem[Gavinsky and Ito(2010)]{Gavinsky2010}
Dmitry Gavinsky and Tsuyoshi Ito.
\newblock {Quantum Fingerprints that Keep Secrets}.
\newblock page~28, 2010.
\newblock ISSN 15337146.
\newblock URL \url{http://arxiv.org/abs/1010.5342}.

\bibitem[Gillman(1993)]{Gillman1993}
D.~Gillman.
\newblock {A Chernoff bound for random walks on expander graphs}.
\newblock \emph{Proceedings of 1993 IEEE 34th Annual Foundations of Computer
  Science}, pages 680--691, 1993.
\newblock ISSN 0097-5397.
\newblock \doi{10.1109/SFCS.1993.366819}.
\newblock URL
  \url{http://ieeexplore.ieee.org/lpdocs/epic03/wrapper.htm?arnumber=366819}.

\bibitem[Gottesman and Chuang(2001)]{Gottesman2001}
Daniel Gottesman and Isaac~L Chuang.
\newblock {Quantum Digital Signatures}.
\newblock Technical report, 2001.

\bibitem[Guruswami et~al.(2009)Guruswami, Umans, and Vadhan]{Guruswami2009}
Venkatesan Guruswami, Christopher Umans, and Salil Vadhan.
\newblock {Unbalanced expanders and randomness extractors from parvaresh-vardy
  codes}.
\newblock \emph{Journal of the ACM}, 56\penalty0 (4):\penalty0 1--34, 2009.
\newblock ISSN 10930159.
\newblock \doi{10.1109/CCC.2007.38}.

\bibitem[Hoory et~al.(2006)Hoory, Linial, and Wigderson]{Hoory2006}
Shlomo Hoory, Nathan Linial, and Avi Wigderson.
\newblock {Expander graphs and their applications}.
\newblock \emph{Bulletin of the American Mathematical Society}, 43\penalty0
  (4):\penalty0 439--561, 2006.
\newblock ISSN 02730979.
\newblock \doi{10.1090/S0273-0979-06-01126-8}.

\bibitem[Nayak(1999)]{Nayak1999}
Ashwin~V. Nayak.
\newblock \emph{{Lower Bounds for Quantum Computation and Communication}}.
\newblock PhD thesis, California, Berkeley, 1999.
\newblock URL \url{http://arxiv.org/abs/1011.1669
  http://dx.doi.org/10.1088/1751-8113/44/8/085201}.

\bibitem[Shaltiel(2011)]{Shaltiel2011}
Ronen Shaltiel.
\newblock {An introduction to randomness extractors}.
\newblock \emph{Lecture Notes in Computer Science (including subseries Lecture
  Notes in Artificial Intelligence and Lecture Notes in Bioinformatics)}, 6756
  LNCS\penalty0 (PART 2):\penalty0 21--41, 2011.
\newblock ISSN 03029743.
\newblock \doi{10.1007/978-3-642-22012-8{\_}2}.

\bibitem[Ta-Shma(2009)]{Ta-Shma2009}
Amnon Ta-Shma.
\newblock {Short Seed Extractors Against Quantum Storage}.
\newblock \emph{Proc. ACM STOC}, pages 401--408, 2009.
\newblock ISSN 0097-5397.
\newblock \doi{10.1145/1536414.1536470}.
\newblock URL \url{http://portal.acm.org/citation.cfm?doid=1536414.1536470}.

\bibitem[Tillich and Z{\'{e}}mor(1994)]{Tillich1994}
Jean-Pierre Tillich and Gilles Z{\'{e}}mor.
\newblock {Group-theoretic hash functions}.
\newblock In \emph{Algebraic Coding}, pages 90--110. Springer Berlin
  Heidelberg, 1994.
\newblock \doi{10.1007/3-540-57843-9{\_}12}.
\newblock URL \url{http://link.springer.com/10.1007/3-540-57843-9{\_}12}.

\bibitem[Vasiliev(2016)]{Vasiliev2016}
Alexander Vasiliev.
\newblock {Quantum Hashing for Finite Abelian Groups}.
\newblock pages 1--5, 2016.
\newblock URL \url{http://arxiv.org/abs/1603.02209}.

\bibitem[Vasiliev et~al.(2015)Vasiliev, Latypov, and Ziatdinov]{Vasiliev2015}
Alexander Vasiliev, Marat Latypov, and Mansur Ziatdinov.
\newblock {Minimizing Collisions for Quantum Hashing}.
\newblock \emph{International Journal of Applied Engineering Research}, pages
  1--5, 2015.

\bibitem[Ziatdinov(2016)]{Ziatdinov2016}
Mansur Ziatdinov.
\newblock {Quantum hashing. Group approach}.
\newblock \emph{Lobachevskii Journal of Mathematics}, \penalty0 (2), 2016.
\newblock URL \url{http://arxiv.org/abs/1412.5135}.

\end{thebibliography}

\end{document}